\newclass{\NTM}{NTM}
\newclass{\NCM}{NCM}
\newclass{\DCM}{DCM}
\newclass{\DTM}{DTM}
\newclass{\NFA}{NFA}
\newclass{\NPDA}{NPDA}
\newclass{\DPDA}{DPDA}
\newclass{\DFA}{DFA}
\newsavebox{\spacebox}
\newcommand{\blank}{\usebox{\spacebox}}%
\newcommand\rw{\downarrow}
\newenvironment{claimproof}{%
  \ifnum \value{d@proof}=0{\setcounter{claim}{0}}\else\fi
  \stepcounter{d@proof}\par\noindent
  {\rmfamily\itshape\mdseries Proof.}\hspace{\labelsep}\ignorespaces}%
  {\addtocounter{d@proof}{-1}\mbox{}\nolinebreak\hfill~%
  \ifnum \value{d@proof}=0{\qed}\else
    \ifnum \value{d@proof}=1{\qed\nolinebreak\,\nolinebreak}\else
      \ifnum \value{d@proof}=2{\qed\nolinebreak\,\nolinebreak\qed
          \nolinebreak\,\nolinebreak\qed}\else
        {\qed\nolinebreak...\nolinebreak\qed}%
  \fi\fi\fi
  \medbreak
}
\newtheorem{construction}[theorem]{Construction}
\begin{document}%
     %
\title{On the Density of Languages Accepted by Turing Machines and Other Machine Models}
     %

\runningtitle{Density of Turing Machines}
     
\runningauthors{\textsc{O.H.~Ibarra}, \textsc{I.~McQuillan}}
     

\author[SB]{Oscar H. Ibarra}
\address[SB]{Department of Computer Science,
  University of California, Santa Barbara, \\
  CA 93106, USA\\
  \email{ibarra@cs.ucsb.edu}
}

\author[UofS,IAN]{Ian McQuillan}
\address[UofS]{Department of Computer Science,
  University of Saskatchewan,\\
  110 Science Place,
  Saskatoon, SK,
  Canada\\
  \email{mcquillan@cs.usask.ca}
}
\thanks[IAN]{Supported in part by a grant from NSERC.}
\thanks{Published at Journal of Languages, Automata and Combinatorics, 23, 2018, 189--199 DOI: 10.25596/jalc-2018-189}

\maketitle
%
\begin{abstract}
A language is dense if the set of all infixes (or subwords) of the language is the set of all words. Here, it is shown that it is decidable whether the language accepted by a nondeterministic Turing machine with a one-way read-only input and a reversal-bounded read/write worktape (the read/write head changes direction at most some fixed number of times) is dense. From this, it is implied that it is also decidable for one-way reversal-bounded queue automata, one-way reversal-bounded stack automata, and one-way reversal-bounded $k$-flip pushdown automata (machines that can ``flip'' their pushdowns up to $k$ times). However, it is 
undecidable for deterministic Turing machines with
two 1-reversal-bounded worktapes (even when the two tapes are restricted to operate as 1-reversal-bounded pushdown stacks).
\keywords
    density, Turing machines, store languages, pushdowns, queues, stacks
\end{abstract}

%
%
%
\section{Introduction}

A language $L \subseteq \Sigma^*$ is said to be {\em dense} if
the set of all infixes of $L$ is equal to $\Sigma^*$ \cite{TheoryOfCodes}.
This is an interesting property especially relevant to the theory of codes \cite{CodesHandbook}.
The notion has been investigated as it pertains to independent sets,
maximal independent sets, and disjunctive languages \cite{denseIto,shyr}.
Later, the notion was generalized from the set of infixes of a language being the universe, to arbitrary relations used in place of the infix relation \cite{JKT}. For example, a language $L$ is suffix-dense if
the set of all suffixes of $L$ is equal to the universe.
Homomorphisms that preserve different types of density were investigated
as well \cite{JurgensenActaCybernetica}.

Recently, these generalized notions of density were studied
as applied to types of pushdown automata and counter machines \cite{densityCFL}. 
It was surprisingly found that it is decidable whether a language
accepted by a reversal-bounded nondeterministic pushdown automaton ($\NPDA$) is dense (a pushdown is $l$ reversal-bounded if the number of changes in direction between pushing and popping is at most $l$, and is reversal-bounded if it is $l$ reversal-bounded for some $l$). This is true despite the fact that it is undecidable whether the language accepted by such a machine is equal to $\Sigma^*$ (even if the pushdown is a 1-reversal-bounded counter). Therefore, deciding whether the set of infixes is the universe is {\it easier} than deciding whether the set itself is equal to the universe. 
However, density was found to be undecidable for deterministic
pushdown automata (without a reversal-bound), and nondeterministic
one counter automata (again without a reversal-bound).
Decidability and undecidability results for other variants of density are also presented in \cite{densityCFL}.

A key property used to prove decidability of density for reversal-bounded $\NPDA$s was that of the store language. The store
language of a pushdown is the set of all state/store content
pairs, $qx$ where $q$ is a state and $x$ is a word over the pushdown
alphabet, that can appear in any accepting computation. It
is known that the store language of every $\NPDA$ is a regular language
\cite{CFHandbook}. This inspired the authors to investigate
the store language of other machine models, especially some
types of Turing machines \cite{StoreTechReport}. Of particular
interest was the recent result that the store language of every
nondeterministic Turing machine with a one-way read-only input tape,
and a reversal-bounded worktape (the store; where there is a bound
on the number of changes of direction of the read/write head) is
a regular language.

In this paper, density of languages accepted by this same type of nondeterministic Turing machines is similarly shown to be decidable using the regularity of the store languages.  A corollary to this is
decidability of density for languages accepted by several types of one-way nondeterministic reversal-bounded machine models: stack automata \cite{StackAutomata,Stack2} (machines that can enter the ``contents'' of the pushdown in read-only mode),
queue automata \cite{Harju2002278}, and $k$-flip pushdown automata \cite{flipPushdown} (machines that can
flip their pushdown contents at most $k$ times).
However, undecidability is obtained for one-way deterministic Turing machines with a one-way 
read-only input and two 1-reversal-bounded pushdowns,
or for nondeterministic Turing machines over a unary alphabet.
Decidability of density for several types of two-way machine
models is also investigated.

\section{Preliminaries}

In this paper, we assume knowledge of formal language and automata
theory, referring to \cite{HU} for an introduction. This includes
nondeterministic finite automata ($\NFA$), nondeterministic
pushdown automata (\NPDA), nondeterministic Turing machines, and
their deterministic variants (obtained by replacing $N$ with $D$).

An alphabet $\Sigma$ is a finite set of symbols, and $\Sigma^*$ is
the set of all words over $\Sigma$. A language $L$ is any subset of
$\Sigma^*$. Let $w \in \Sigma^*$. Then $|w|$ is the length
of $w$, and $|w|_a$ is the number of $a$'s in $w$, for $a \in \Sigma$. 
The reverse of $w$, $w^R$ is the word obtained by
reversing the order of the letters of $w$.
A word $y$ is an infix of $w$ if $w = xyz$,
for some $x,z \in\Sigma^*$. Given $L \subseteq \Sigma^*$, 
$\inf(L) = \{y \mid y \mbox{~is an infix of~} w \in L\}$.
The left quotient of $L$ by $R$, $R^{-1} L = \{y \mid xy \in L, x \in R\}$.

Next, we will define nondeterministic Turing machines, which
we will define to have a one-way read-only input, and a separate
read/write bi-infinite worktape. 

A nondeterministic Turing machine ($\NTM$) is a tuple
$M = (Q, \Sigma, \Gamma, \blank, \downarrow, \delta, q_0, F)$, where
$Q$ is the finite set of states, $\Sigma$ is the input alphabet,
$\Gamma$ is the worktape alphabet, $\blank \in \Gamma$ is the blank symbol, $\downarrow \in \Gamma$ is the worktape
head, $q_0 \in Q$ is the initial
state, $F \subseteq Q$ is the set of final states,
$\Gamma_0 = \Gamma - \{\downarrow\}$,
and
$\delta$ is a relation from 
$Q \times (\Sigma \cup \{\lambda\}) \times \Gamma_0$ to
$Q \times \Gamma_0 \times \{{\rm L}, {\rm S}, {\rm R}\}$.

A {\em configuration} of $M$ is a tuple
$(q, w, x)$, where $q \in Q$ is the current state,
$w \in \Sigma^*$ is the remaining input, and 
$x \in \Gamma^*$ is the worktape contents with $|x|_{\downarrow}= 1$.
Next, we will describe how configurations change. Below,
$q, q' \in Q, a \in \Sigma \cup \{\lambda\}, x \in (\Gamma_0^*- \blank \Gamma_0^*),
y \in (\Gamma_0^*- \Gamma_0^* \blank), c,d,c', d' \in \Gamma_0$.
\begin{itemize}
\item $(q,aw,x \downarrow c y) \vdash_M (q', w, x \downarrow d y)$, if
$(q',d, S) \in \delta(q, a, c),$
\item $(q,aw,x \downarrow c y) \vdash_M (q', w, x d' \downarrow y')$, if
$(q',d, R) \in \delta(q, a, c), (y = \lambda \implies y' = \blank, \mbox{~otherwise~} y' = y), (x = \lambda \wedge d = \blank \implies d' =\lambda, \mbox{~otherwise~} d' = d)$,
\item $(q,aw,x \downarrow c y) \vdash_M (q', w, x' \downarrow c' d' y)$, if
$(q',d, L) \in \delta(q, a, c), (x = \lambda \implies x' = \lambda, c' = \blank, \mbox{~otherwise~} x = x' c'), (y = \lambda \wedge d = \blank \implies d' =\lambda, \mbox{~otherwise~} d' = d)$.
\end{itemize}
Let $\vdash_M^*$ be the reflexive and transitive closure of $\vdash_M$.
Then, the language accepted by $M$, denoted by $L(M)$ is defined to be:
$$L(M) = \{ w \mid (q_0, w, \downarrow \blank) \vdash_M^*
(q_f, \lambda, x), w \in \Sigma^*, q_f \in F\}.$$
Furthermore, the store language of $M$, $S(M)$, is defined to be:
$$S(M) = \{ qx \mid (q_0, w, \downarrow \blank) \vdash_M^*
(q, w', x) \vdash_M^* (q_f, \lambda, x'), q_f \in F,
w, w' \in \Sigma^*\}.$$

An $\NTM$ of this type is said to be {\em reversal-bounded}, if there
is a $k$ such that $M$ makes at most $k$ changes between moving the
worktape left and right on every accepting computation.
Such a machine can be assumed to be deterministic in the usual
fashion \cite{StoreTechReport}.

In \cite{StoreTechReport}, the following was shown:
\begin{proposition}
\label{TMReg}
Given a nondeterministic Turing machine $M$ with a one-way read-only input tape, and a reversal-bounded worktape, then $S(M)$ is a regular language that can be
effectively constructed from $M$.
\end{proposition}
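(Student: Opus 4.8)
The plan is to construct, effectively from $M$, an NFA whose language is exactly $S(M)$, using a crossing-sequence analysis of the reversal-bounded worktape. First I would eliminate the input tape. Since $qx\in S(M)$ only requires that \emph{some} input $w$ admit an accepting computation passing through a configuration with state $q$ and worktape $x$, and since the input is one-way and read-only, every transition that reads a symbol $a\in\Sigma$ can be replaced by a $\lambda$-transition that nondeterministically \emph{guesses} $a$. The guessed symbols spell out a witnessing $w$, and conversely, so the set of pairs $qx$ occurring on accepting computations is unchanged, as is the number of worktape reversals. Hence I may assume $M$ has no input and work with $S(M)=\{qx\mid (q_0,\downarrow\blank)\vdash_M^* (q,x)\vdash_M^* (q_f,x'),\ q_f\in F\}$.

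Next, fix the reversal bound $k$. Every accepting computation then decomposes into at most $k+1$ maximal monotone sweeps of the worktape head, so each boundary between two adjacent worktape cells is crossed at most $k+1$ times. For each such boundary I would record its \emph{crossing sequence}: the bounded-length list of (state, direction) pairs at which the head passes the boundary, in temporal order, annotated by the sweep index in $\{0,\dots,k\}$. The NFA scans the encoding of $x=u\downarrow c v$ from left to right and, at each cell, guesses the crossing sequences on its two sides together with the bounded sequence of symbols the head writes there across its at most $k+1$ visits. The automaton verifies horizontal consistency: given the guessed crossing sequences of the two boundaries of a cell, the cell's successive contents, and $\delta$, the head behaviour inside the cell (read, finitely many $S$-moves, rewrite, exit left/right, or halt) must be realizable. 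Boundary conditions force the unvisited far-left and far-right cells to be $\blank$, the distinguished start cell to begin as $\blank$ in state $q_0$, and the computation to reach a state in $F$. Standard adjacent-boundary compatibility makes all of this a finite, regular check.

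The main obstacle is the feature that distinguishes a \emph{store} language from an ordinary acceptance analysis: $x$ is a snapshot taken at some intermediate time $t_0$, so I must certify that the symbol recorded at each position of $x$ is exactly that cell's content \emph{at time $t_0$}, not merely some content it ever held. The resolution is to exploit the rigid sweep structure. I would guess the sweep index $j_0$ and the head cell $p$ at the observation, and record in the NFA's state which crossings occur before $t_0$ (those in sweeps $<j_0$, and those in sweep $j_0$ on the already-traversed side of $p$) and which occur after. Because each sweep is monotone and visits a contiguous range of cells, this before/after labelling is consistent with a single global time order and is determined locally; the content of a cell at $t_0$ is then the symbol written by its last pre-$t_0$ visit, or $\blank$ if it has no pre-$t_0$ visit, with the scanned cell $p$ carrying the symbol marked by $\downarrow$. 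Matching these computed contents against the symbols of $x$ completes the construction, and the resulting NFA accepts exactly $S(M)$, which is therefore regular and effectively constructible from $M$.
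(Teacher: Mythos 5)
This proposition is not actually proved in the paper: it is imported verbatim from \cite{StoreTechReport}, so there is no in-paper argument to compare yours against. Judged on its own terms, your crossing-sequence construction is a sound, self-contained route to the result, and its three main ingredients are correct: erasing the one-way read-only input by nondeterministically guessing the consumed letters leaves the set of pairs $qx$ arising in accepting computations (and the reversal bound) unchanged; a $k$-reversal-bounded head produces at most $k+1$ monotone sweeps, hence each inter-cell boundary is crossed at most $k+1$ times, making the Hennie-style local-consistency check finite-state (unboundedly many S-moves inside a single visit are harmless, since ``entry state and contents lead via stay moves to exit state and written symbol, passing through state $q$ with content $c$ at some moment'' is a precomputable finite reachability relation); and your classification of crossings into pre-$t_0$ and post-$t_0$ via the guessed sweep index $j_0$ and the head cell $p$ marked by $\downarrow$ is precisely the extra bookkeeping that distinguishes certifying a \emph{store} word from certifying acceptance, and it is legitimate because sweeps are monotone, so the before/after labelling is locally determined yet globally consistent.

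One step, as literally stated, would fail and needs the standard repair. You require ``the unvisited far-left and far-right cells to be blank,'' but cells outside the extent of $x$ need not be unvisited: they must merely contain $\blank$ \emph{at time} $t_0$. The head may travel arbitrarily far beyond both ends of $x$ before $t_0$ (e.g., writing and then erasing, which the trimming convention on configurations then hides) and again after $t_0$, and even the initial or halting cell may lie outside the extent of $x$. Since your NFA scans only the finite word $qx$, it cannot simulate those outer cells directly; instead you must precompute, by a finite reachability analysis over the finitely many possible crossing sequences, which sequences at the two extremal boundaries are \emph{completable} by a run of cells that are all blank at $t_0$ (carrying flags for whether the start cell, the halt cell, or the snapshot-sweep turnaround lies out there), and accept only when the guessed end sequences belong to that set. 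Together with enforcing the paper's trimming convention (the recorded $x$ carries no leading or trailing blanks beyond what the head marker requires), this closes the construction, and the rest of your argument stands.
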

It is clear that should the reversal-bounded condition be removed, then
$S(M)$ could be an arbitrary recursively enumerable language
after a left quotient by a state symbol, since, given an arbitrary Turing machine with a two-way read/write
input tape $M$ and initial state $q_0$ (that without loss of generality
is never re-entered), a TM with a one-way read-only input tape and a worktape $M'$ could be constructed that copies the input to the store, then simulates $M$ on the worktape. Then $(q_0\downarrow)^{-1}S(M) = L(M)$.

\section{Density of Turing Machines}

A language $L \subseteq \Sigma^*$ is {\em dense} if
$ \inf(L) = \Sigma^*$. Even though it has long
been known that it is undecidable for even
a one-way nondeterministic one counter automaton that makes only
one reversal on the counter, whether the language it accepts is equal to $\Sigma^*$ \cite{Baker1974}, 
the infix
operator perhaps counter-intutively makes the problem easier.

Here, we will prove that it is decidable whether a language accepted by a
one-way nondeterministic Turing machine with a reversal-bounded read/write 
worktape is dense. Then, 
it will follow that this is true  for a number of different machine models.
The main tool of the proof is that for these types of Turing machines,
the store language is a regular language, by Proposition \ref{TMReg}.
\begin{proposition}
It is decidable, given $L$ accepted by a $\NTM$ with a one-way read-only
input and a $k$-reversal-bounded read/write worktape, whether
$L$ is dense.
\end{proposition}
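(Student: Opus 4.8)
The plan is to reduce the density question to a decidable property of a regular language obtained from the store language. The key observation is that $\inf(L(M)) = \Sigma^*$ is equivalent to asking whether every word $y \in \Sigma^*$ appears as an infix of some accepted word, i.e.\ whether there exist $x, z \in \Sigma^*$ with $xyz \in L(M)$. The strategy is to build, from $M$, a machine or regular language that captures precisely those $y$ that do occur as infixes, and then test that set for equality with $\Sigma^*$, which is decidable for regular languages.

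First I would use Proposition~\ref{TMReg} to effectively construct a finite automaton for the store language $S(M)$. The central idea is that an infix $y$ of an accepted word corresponds to a split of an accepting computation into three phases: an initial phase reading the prefix $x$ and arriving at some state/store configuration, a middle phase reading exactly $y$, and a final phase reading the suffix $z$ and accepting. The store language already encodes, via the pairs $qw$, exactly those configurations $(q, w', w)$ that are reachable from the start \emph{and} co-reachable to acceptance. So the crux is: $y \in \inf(L(M))$ if and only if there is a configuration $q_1 w_1 \in S(M)$ and a computation segment reading precisely $y$ on the input that leads to another configuration $q_2 w_2 \in S(M)$. Because $S(M)$ is regular, I would build a new $\NTM$ (again one-way input, reversal-bounded worktape) $M'$ that nondeterministically guesses a starting store configuration from $S(M)$ (loading it onto the worktape while verifying membership via the automaton for $S(M)$), then simulates $M$ reading its input while tracking that the resulting configuration again lies in $S(M)$, accepting when it does. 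By construction $L(M') = \inf(L(M))$.

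The key step is then to argue that $L(M')$ is itself regular, or at least that its density (equivalently, whether $L(M') = \Sigma^*$) is decidable. The cleanest route is to show that $\inf(L(M))$ is in fact a regular language: applying the store-language machinery to $M'$ (which is again of the admissible type, being a one-way reversal-bounded $\NTM$), one sees that the ``input projection'' of computations bounded between two $S(M)$-configurations is captured regularly, because the reachability relation between store configurations, restricted to the regular set $S(M)$ on both ends, is describable by an $\NFA$. Once $\inf(L(M))$ is shown to be effectively regular, deciding $\inf(L(M)) = \Sigma^*$ is immediate, since universality is decidable for $\NFA$s.

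I expect the main obstacle to be the second step: rigorously establishing that the input words read between two $S(M)$-configurations form a regular language, and that this construction remains within the reversal-bounded regime so that Proposition~\ref{TMReg} continues to apply. In particular, loading an arbitrary store configuration from the regular set $S(M)$ onto the worktape must be done without consuming reversals improperly, and the verification that both the start and end configurations lie in $S(M)$ must be arranged so the middle phase reads exactly $y$. Handling the interface between the regular encoding of $S(M)$ and the worktape dynamics carefully is where the real work lies; once that is in place, the decidability follows by reducing to $\NFA$ universality.
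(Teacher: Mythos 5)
There is a genuine gap, and it sits exactly where you predicted the ``real work'' would be --- but the obstacle is not bookkeeping; the step is false. Your plan hinges on showing that $\inf(L(M))$ is effectively regular (or, as a fallback, on deciding $L(M') = \Sigma^*$ for the reversal-bounded $\NTM$ $M'$ with $L(M') = \inf(L(M))$). Neither works. Take $L = \{c\, a^n b^n c \mid n \ge 0\}$, accepted by such a machine with a single $1$-reversal-bounded worktape; then $\inf(L) \cap c\,a^* b^* c = L$, so $\inf(L)$ is not regular, and no amount of care in interfacing $S(M)$ with the worktape will make it so. Concretely, the input projection of a computation segment bounded between two $S(M)$-configurations need not be regular, because the worktape head may reverse \emph{inside} the segment and thereby correlate what is read before and after the reversal. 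The fallback fails as well: universality is undecidable already for one-way nondeterministic $1$-reversal-bounded one-counter machines (Baker and Book, cited in the paper), a class subsumed by the machines containing your $M'$, so ``decide $L(M') = \Sigma^*$'' is not an available primitive. (Your construction of $M'$ with $L(M') = \inf(L(M))$ is itself fine --- the stitching argument via $S(M)$ is sound --- but it leads nowhere decidable.)

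The paper's proof supplies the two ideas your proposal is missing. First, it extracts only the input segments read between two visits to the \emph{same} state $q$: since the states can be partitioned into $Q_0, \ldots, Q_k$ by reversal phase, on any $q$-to-$q$ segment the head moves monotonically (right/stay, or left/stay), and for such monotone segments your sketched construction does go through --- an $\NFA$ can guess the transition-label sequence $y$ together with the untouched store context $\mu, \nu$ and verify that both endpoint store words ($\mu$ with the word read, and $\mu$ with the word written, suitably marked with the head position) lie in $R^q = q^{-1} S(M)$, yielding a regular $L^q$ and hence a regular $L' = \bigcup_{q \in Q} L^q \subseteq \inf(L)$. Second, a pigeonhole argument repairs the apparent loss of generality: if $\inf(L) = \Sigma^*$, then $w^{|Q|+1} \in \inf(L)$, and some full copy of $w$ must be read between two occurrences of the same state, so $w \in \inf(L')$; hence $\inf(L) = \Sigma^*$ if and only if $\inf(L') = \Sigma^*$, and the latter is decidable because $L'$ is regular. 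Without the same-state restriction and the $w^{|Q|+1}$ trick, your reduction bottoms out in a non-regular language and an undecidable universality problem.
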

\begin{proof}
Let $M = (Q,\Sigma,\Gamma, \blank,\downarrow, \delta,q_0,F)$ be an 
$\NTM$ with a one-way input and a 
$k$ reversal-bounded worktape. 
Assume without loss of generality that all transitions that stay on
the worktape do not change the worktape (any changes can be remembered in the state and applied when moving).

Next, 
assume without loss of generality that the states of $M$ 
are partitioned into subsets 
$Q_0, Q_1, \ldots, Q_k$,
where $Q_i$ contains all states defined on or after the $i$th reversal, but before the $(i+1)$st reversal. 
Let $T$ be a set of labels in bijective correspondence with the
transitions of $M$ (each transition $(p',d,x) \in \delta(p,a,c)$ has an 
associated label in $T$).

Consider $S(M)$, which is a regular language by Proposition \ref{TMReg}.
For each $q \in Q$, consider $R^q = q^{-1} S(M)$, also regular.

Consider the language 
$$L^q = \{w \mid (q_0, uwv , \rw \blank) \vdash_M^* (q, wv , \alpha) \vdash_M^* (q, v, \beta) \vdash_M^* (q_f, \lambda, \gamma), u,v,w \in \Sigma^*, q_f \in F\}.$$

It will be proven that $L^q$ is regular for each $q \in Q$. 

A sequence of transition labels $y = t_1 \cdots t_m, t_i \in T, 1 \leq i \leq m$, is {\em valid for state $q$} if $t_1$ starts in state $q$,
and $t_m$ switches to $q$, and for each $i$, $1 \leq i < m$, the
outgoing state of $t_i$ is the same as the incoming state of $t_{i+1}$,
and $t_i$ being a transition that stays on letter $d \in \Gamma$ of the store implies $t_{i+1}$ also reads $d$. There is no restriction on the input word read while
transitioning via $y$.
Given a valid $y$, let $\overleftarrow{y} \in \Gamma^*$ be the
word with the first 
letter of it being the store letter
$t_1$ is defined on, and subsequent letters are 
obtained from $y$ from left-to-right by concatenating the letters
{\it read} from the store after a transition that moves either left or right (but not stay) on the store.
Also, given a valid $y$, let $\overrightarrow{y}$ be obtained from
$y$ from left-to-right by concatenating all letters {\it written}
on the store during a transition
that moves either left or right (but not stay) on the store. 
Notice that since $\overleftarrow{y}$ and $\overrightarrow{y}$ are only defined
on valid $y$, since $y$ transitions from state $q$ to itself, and since the states
are partitioned by reversal-bounds, then $\overleftarrow{y}$ and $\overrightarrow{y}$ are
only defined on $y$ such that all transitions in $y$ only move right or stay, or move left or stay.
Lastly, let $\dot{y}$ be $\lambda$ if $y$
ends with a transition that moves right or left on the store,
and the last store letter read by the last transition of $y$ otherwise (if a stay transition).

Let $h_{\Sigma}$ be a homomorphism from $(T \cup \Gamma)^*$ to $\Sigma^*$ that erases all letters of
$\Gamma$ and maps each $t \in T$ to the input in $\Sigma \cup \{\lambda\}$
read by $t$.

For $q \in Q_i$, with $i$ even, then there are no transitions
moving left on the store between states $q$ and $q$.
Then, an $\NFA$ $M'$ is created accepting an intermediate language
$L_1^q \subseteq \Gamma^* T^* \Gamma^*$, where $M'$ does
the following in parallel:
\begin{itemize}
\item verifies that the input is of the form
$\mu y \nu$, where $\mu,\nu \in \Gamma_0^*$,
$y = t_1 \cdots t_m, t_i \in T, 1 \leq i \leq m$, and that
$y$ is valid for $q$,
\item 
verifies that $\mu \rw \overleftarrow{y} \nu \in R^q$,
\item verifies that $\mu \overrightarrow{y} \rw \dot{y} \nu
\in R^q$.
\end{itemize}

Intuitively, $\mu \rw \overleftarrow{y} \nu \in R^q$ enforces that this word is in the store language, which implies
that there is some input word which can reach this configuration in an accepting computation. Then, $M'$ enforces
that $y$ encodes a valid change of configurations between the two configurations involving $q$, which is possible since
the worktape head only moves to the right.
Since $y$ is valid, the sequence of transitions
switches $\overleftarrow{y}$ to $\overrightarrow{y}$ in $M$.
So $M$ can reach $\mu \overrightarrow{y} \rw \dot{y} \nu$, which
is then verified to be in $R^q$,
enforcing that this is in the store language, 
so an input word can lead it to acceptance.

\begin{claim}
$h_{\Sigma}(L_1^q) = L^q$.
\end{claim}
\begin{claimproof}
``$\subseteq$'' Let $s\in h_{\Sigma}(L_1^q)$. Thus, there exists
$r \in \Gamma^* T^* \Gamma^*$ such that $h_{\Sigma}(r) = s$ and
$r \in L_1^q$. 
Then $r = \mu y \nu, \mu,\nu \in \Gamma^*, 
y \in T^*$, where $M'$
verifies $\mu  \rw \overleftarrow{y} \nu  \in R^q$.
Then $(q_0, u, \rw \blank) \vdash_M^* (q, \lambda, \mu \rw \overleftarrow{y} \nu)$ 
for some $u \in \Sigma^*$. Then,
$(q_0, u s, \rw \blank) \vdash_M^* (q, s, \mu \rw \overleftarrow{y} \nu)$.
Then, on each letter of $y$, since $y$ is valid, 
$M'$ simulates $M$ on the last letter read from 
$\Gamma$, reading each letter from $s \in \Sigma^*$ and replacing each letter of $\overleftarrow{y}$ with the next one from 
$\overrightarrow{y}$, while starting and finishing in state $q$. 
Assume first that the last letter of $y$ is a transition that moves
right on the store.
Thus, 
$$(q, s, \mu \rw \overleftarrow{y} \nu) \vdash_M^* (q, \lambda, \mu \overrightarrow{y} \rw \nu = \mu \overrightarrow{y} \rw \dot{y} \nu),$$
since $\dot{y} = \lambda$. Similarly if the last
letter of $y$ is a transition that stays on the store, then
$(q, s, \mu \rw \overleftarrow{y} \nu) \vdash_M^* (q, \lambda, \mu \overrightarrow{y} \rw \dot{y} \nu)$, where $\dot{y} \in \Gamma$.
In either case, then since $M'$ verified 
$\mu \overrightarrow{y} \rw \dot{y} \nu \in R^q$,
and this implies $(q, v, \mu \overrightarrow{y} \rw \dot{y} \nu) \vdash_M^* (q_f, \lambda, \gamma),
q_f \in F$, for some $v \in \Sigma^*$. Hence $h_{\Sigma}(y)=s \in L^q$.

``$\supseteq$''
Let $s \in L^q$. Thus,
\begin{equation}(q_0, usv , \rw \blank) \vdash_M^* (q, sv , \alpha_1 \rw \alpha_2 \alpha_4) \vdash_M^* (q,v ,\alpha_1 \alpha_3 \rw \alpha_4)
\vdash_M^* (q_f, \lambda, \gamma),
\label{claim1eqn}
\end{equation}
$q_f \in F$, for some $\alpha_1, \alpha_2, \alpha_3, \alpha_4 \in \Gamma_0^*$. 
Let the derivation above between states $q$ and $q$ be via transitions $t_1, \ldots, t_n$
respectively. Let $y = t_1 \cdots t_n$. First, $y$ must be valid for $q$. Assume first that this sequence ends with a transition that moves right on the store. 
Then $\alpha_2 = \overleftarrow{y}$ and $\alpha_3 = \overrightarrow{y}$
and $\dot{y} = \lambda$. Indeed,
$\alpha_1 \rw \alpha_2 \alpha_4 \in R^q$ by Equation (\ref{claim1eqn}),
and $\alpha_1 \rw \alpha_2 \alpha_4 = \alpha_1 \rw \overleftarrow{y} \alpha_4 \in R^q$. Further, $\alpha_1 \alpha_3 \rw \alpha_4 \in R^q$
by (\ref{claim1eqn}), and $\alpha_1 \alpha_3 \rw \alpha_4 = 
\alpha_1 \overrightarrow{y} \rw \dot{y} \alpha_4 \in R^q$.
Thus, $s \in h_{\Sigma}(L_1^q)$. Similarly if the sequence ends with a stay transition.
 \end{claimproof}

For $q \in Q_i$ with $i$ odd, then another language $L_2^q$ is created.
Hence, that there are no right transitions on the store defined between state
$q$ and itself. This case is similar in principal, however,
it is slightly trickier since the transitions work in a right-to-left
fashion, and because the read/write head is placed to the left of the
scanned symbol, thereby causing a mild complication at the beginning
and end of the word $y$.

Create an intermediate $2$-way $\NFA$ $M'$ accepting a language 
$L_2^q \subseteq \Gamma^* T^* \Gamma^*$, where $M'$ does all of
the following:
\begin{itemize}
\item verifies that the input is of the form
$\mu y \nu $, where $\mu,\nu \in \Gamma_0^*$,
$z = y^R = t_1 \cdots t_m, t_i \in T, 1 \leq i \leq m$, and that
$z$ is valid for $q$,
\item let $z'$ be obtained from $(\overleftarrow{z})^R$ by inserting
$\rw$ in the second last position; it verifies 
that $\mu z' \nu \in R^q$ (by using the reverse of $R^q$) as it reads $y$ (from right-to-left),
\item if $y$ starts with a symbol that stays on the store, then $M'$ verifies that $\mu \rw (\overrightarrow{z}\dot{z})^R \nu \in R^q$; 
otherwise, $M'$ verifies
that $\mu_1 \rw b (\overrightarrow{z})^R \nu \in R^q, b \in \Gamma, \mu = \mu_1 b$ if $ \mu \neq \lambda$, and $\mu_1 = \lambda, b = \blank$ if $ \mu = \lambda$.

\end{itemize}
\begin{claim}
$h_{\Sigma}(L_2^q) = L^q$.
\end{claim}
\begin{claimproof}
``$\subseteq$'' Let $s\in h_{\Sigma}(L_2^q)$. Thus, there exists
$r \in \Gamma^* T^* \Gamma^*$ such that $h_{\Sigma}(r) = s$ and
$r \in L_2^q$. 
Then $r = \mu y \nu, \mu,\nu \in \Gamma^*, 
y \in T^*, z = y^R$, where $M'$
verifies that $\mu z' \nu  \in R^q$ with $z'$ obtained from 
$(\overleftarrow{z})^R$ by inserting $\rw$ in the second last position.
Then $(q_0, u, \rw \blank) \vdash_M^* (q, \lambda, \mu z' \nu)$ 
for some $u \in \Sigma^*$. Then, 
$(q_0, u s, \rw \blank) \vdash_M^* (q, s, \mu z' \nu)$.
Then, on each letter of $y$ from right-to-left, since $z$ is valid, 
$M'$ simulates $M$ on the store letter read from 
$\Gamma$, reading each letter from $s$ and replacing each letter of $\overleftarrow{z}$ with the next one from 
$\overrightarrow{z}$, while starting and finishing in state $q$. 
If the first letter of $y$ is a left transition, then
$$(q, s, \mu z' \nu) \vdash_M^* (q, \lambda, \mu_1 \rw b (\overrightarrow{z})^R \nu),$$
where $\mu = \mu_1 b$ if $\mu \neq \lambda$, and $\mu_1 = \lambda,b= \blank$ if $\mu = \lambda$.
Then since $M'$ verified $\mu_1 \rw b (\overrightarrow{z})^R \nu \in R^q$,
this implies $(q, v, \mu_1 \rw b (\overrightarrow{z})^R \nu) \vdash_M^* (q_f, \lambda, \gamma),
q_f \in F$, for some $v \in \Sigma^*$. Hence $h_{\Sigma}(y) =s \in L^q$.
Similarly if the first letter of $y$ is a stay transition.

``$\supseteq$''
Let $s \in L^q$. Thus,
\begin{equation}
(q_0, usv , \rw \blank) \vdash_M^* (q, sv , \alpha_1 \alpha_2 \rw a \alpha_3) \vdash_M^* (q,v ,\alpha_1 \rw \alpha_4 \alpha_3)
\vdash_M^* (q_f, \lambda, \gamma),
\label{claim2eqn}
\end{equation}
$q_f \in F$, for some $\alpha_1, \alpha_2, \alpha_3, \alpha_4 \in \Gamma_0^*, a \in \Gamma_0$. 
Let the derivation above between states $q$ and $q$
be via transitions $t_1, \ldots, t_n$
respectively. Let $y = t_n \cdots t_1$, and $z = y^R$. First, 
$z$ must be valid for $q$.
Assume first that this ends with a left transition. 

Then, in Equation (\ref{claim2eqn}), let $\alpha_2'$ be such that
$\alpha_2 = b \alpha_2', b\in \Gamma_0$ if $\alpha_2 \neq \lambda$, and
$\alpha_2' = \lambda,b = \blank$ otherwise, and let
$\alpha_4'$ be such that
$\alpha_4 = b \alpha_4'$ (it must start with $b$ since $z$ ends with
a left transition).
Then $(\overleftarrow{z})^R = \alpha_2' a$ and 
$(\overrightarrow{z})^R = \alpha_4'$.
Indeed,
$\alpha_1 b \alpha_2' \rw a \alpha_3 \in R^q$ by Equation 
(\ref{claim2eqn}), and 
$\alpha_1 b \alpha_2' \rw a \alpha_3 = \alpha_1 b z' \alpha_3$,
where $z'$ is obtained from $(\overleftarrow{z})^R$ by inserting
$\rw$ in the second last position. Further,
$\alpha_1 \rw b \alpha_4' \alpha_3 \in R^q$ by
Equation (\ref{claim2eqn}), and $\alpha_1 \rw b \alpha_4' \alpha_3 =
\alpha_1 \rw b (\overrightarrow{z})^R \alpha_3 \in R^q$.
Thus, $s \in h_{\Sigma}(L_2^q)$. Similarly if the sequence ends with a stay transition.
 \end{claimproof}

Hence, $L^q$ is regular since regular languages are closed under homomorphism. Thus, $L' = \bigcup_{q\in Q} L^q$ is
also regular.

Lastly, it is shown that $\inf(L) = \Sigma^*$ if and only if $\inf(L') = \Sigma^*$. Indeed, it is immediate that if $\inf(L') = \Sigma^*$, then $\inf(L) = \Sigma^*$, since
$\inf(L') \subseteq \inf(L)$. For the opposite direction, assume $\inf(L) = \Sigma^*$.
Let $w \in \Sigma^*$. Then $w \in \inf(L)$. Considering
the word $w' = w^{|Q|+1}$, then $w' \in \inf(L)$ as well by the assumption. By the pigeonhole principal, an entire
copy of $w$ has to be read between some state $q$ and itself. Hence, $w \in \inf(L')$,
and $\inf(L') = \Sigma^*$.
 \end{proof}

The next result involves the definitions of pushdown automata
\cite{HU}, queue automata \cite{Harju2002278},
stack automata \cite{StackAutomata,Stack2}, and flip pushdown
automata \cite{flipPushdown}, which will not be defined
formally here. A pushdown automaton, queue automaton, or flip pushdown
automaton are reversal-bounded if there is a bound on the number
of switches between increasing and decreasing the size of the store.
For stack automata, it is defined to be reversal-bounded if there is a bound on the number of changes of direction of the read/write head
(which is influenced both by changing and reading the stack).
From the definitions, the following is straightforward.
\begin{proposition}
\label{languageSimulation}
Given $M$, a reversal-bounded pushdown automaton, then $L(M)$ can be accepted by a nondeterministic Turing machine with
a one-way read-only input, and a reversal-bounded read/write worktape.
This result also holds for $M$, a reversal-bounded queue machine,
a reversal-bounded stack machine, and a reversal-bounded $k$-flip pushdown automaton.
\end{proposition}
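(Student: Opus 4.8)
The plan is to give, for each store type, a direct simulation on the bi-infinite read/write worktape in which the number of direction changes of the worktape head stays bounded whenever the simulated store is reversal-bounded. In all four cases the $\NTM$ reads its input one-way exactly as the original machine does, so only the faithful simulation of the store, together with the reversal bound on the worktape head, needs to be checked.

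For a reversal-bounded pushdown automaton the simulation is immediate: lay out the stack left-to-right with its bottom fixed and its top at the worktape head, so that a push becomes ``write and move right'' and a pop becomes ``move left and read/erase''. A switch between a pushing phase and a popping phase is exactly a change of direction of the worktape head, so the number of worktape reversals equals the number of pushdown reversals. A reversal-bounded stack automaton is handled the same way, except that it may also scan the stack in read-only mode; since its reversal bound is already a bound on the number of direction changes of its stack head, these movements translate directly into worktape head movements with the same reversal count.

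The queue requires one observation. Being reversal-bounded, the queue performs a bounded number of maximal phases, each a pure sequence of enqueues (size non-decreasing) or a pure sequence of dequeues (size non-increasing). I would store the queue contents as a contiguous block with a front position (where dequeues read and erase) and a back position (where enqueues write). Both positions move only to the right: enqueuing advances the back rightward and dequeuing advances the front rightward, so the whole block slides right over the bi-infinite tape and no space is reused. Within an enqueue phase the head works at the back and moves right; within a dequeue phase it works at the front and also moves right. Passing from a dequeue phase to an enqueue phase keeps the head moving right (the back lies to the right of the front), costing no reversal, while passing from an enqueue phase to a dequeue phase forces one leftward excursion from the back to the front and hence $O(1)$ reversals. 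With boundedly many phases, the worktape head reverses boundedly often.

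For a reversal-bounded $k$-flip pushdown the only new feature is the flip. Rather than physically reversing the tape contents, I would keep an orientation bit in the finite control recording which end of the block currently holds the top of the pushdown and let the block grow in either direction on the bi-infinite tape; push and pop then act at whichever end the bit designates. A flip is simulated by toggling the bit and moving the head across the block to the opposite end, a single directional sweep costing $O(1)$ reversals. Between flips the machine behaves as a reversal-bounded pushdown, so the total number of worktape reversals is bounded by the pushdown reversal bound plus $O(k)$, which is again bounded. The main thing to get right in all four constructions is precisely this reversal-bound bookkeeping---most delicately for the queue, where one must notice that the front and the back move monotonically in the same direction, and for the flip pushdown, where the orientation bit avoids the many reversals that a literal reversal of the contents would incur.
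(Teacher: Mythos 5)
Your proposal is correct and is essentially the paper's argument: the paper gives no proof at all, saying only that the result follows ``from the definitions,'' and your four direct simulations with explicit reversal-count bookkeeping (stack laid out with top at the head; queue as a block whose front and back both slide rightward, one reversal per enqueue-to-dequeue phase change; orientation bit plus a single sweep per flip) are exactly the straightforward constructions being invoked. One minor detail worth adding in the queue case: if the model's transitions consult the front-of-queue symbol even on enqueue-only moves, the simulating machine should cache that symbol in its finite control during an enqueue phase (it changes only on dequeues), which costs no additional reversals.
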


Then, from Proposition \ref{languageSimulation}, the following are true.
\begin{corollary}
It is decidable, given $L$ accepted by a reversal-bounded queue automaton, whether $L$ is dense.
\end{corollary}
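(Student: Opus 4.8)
The plan is to reduce the question to the decidability result already established for Turing machines, chaining together the two propositions that precede the corollary. Given a reversal-bounded queue automaton $M$, I would first invoke Proposition~\ref{languageSimulation} to obtain, effectively from $M$, a nondeterministic Turing machine $M'$ with a one-way read-only input and a reversal-bounded read/write worktape such that $L(M') = L(M)$. Since the main density proposition decides density precisely for machines of the form $M'$, applying its decision procedure to $M'$ decides whether $L(M') = L(M)$ is dense. The entire algorithm is thus a composition of two constructive steps, so the only thing requiring care is that each link in the chain is effective and that the hypotheses of the density proposition are genuinely met by the constructed $M'$.

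First I would confirm that the construction underlying Proposition~\ref{languageSimulation} is effective and yields a \emph{computable} reversal bound. The intended simulation stores the queue contents as a contiguous block on the worktape, maintaining the front and the rear of the queue as the two ends of this block: an enqueue step appends at the rear and advances the rear end rightward, while a dequeue step consumes at the front and advances the front end rightward. Within a single enqueue-only or dequeue-only phase the worktape head moves in a single direction; each switch between growing and shrinking the queue forces the head to sweep from one end of the block to the other, costing a bounded number of head reversals. Hence if $M$ makes at most $l$ queue reversals, $M'$ makes at most $c \cdot l$ worktape reversals for a constant $c$ depending only on the construction, and this bound $k = c \cdot l$ is computable from $M$. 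Thus $M'$ satisfies the hypothesis of the density proposition for this explicit $k$, and its algorithm applies.

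The main obstacle is therefore not in the decision step itself, which is delivered wholesale by the density proposition via the regularity of the store language (Proposition~\ref{TMReg}), but in certifying that a queue, despite dequeuing from one end while enqueuing at the other, can be faithfully tracked by a single worktape head while keeping the number of head reversals bounded by a computable function of the queue's reversal bound. Once this simulation is in hand, as asserted by Proposition~\ref{languageSimulation}, no further argument is needed: the density of $L(M)$ is decided simply by running the algorithm of the density proposition on $M'$.
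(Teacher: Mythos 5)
Your proposal is correct and follows exactly the paper's route: the paper derives this corollary immediately by combining Proposition~\ref{languageSimulation} (the simulation of a reversal-bounded queue automaton by an $\NTM$ with one-way read-only input and reversal-bounded worktape) with the main decidability proposition for such Turing machines. Your additional verification that the queue simulation incurs only a constant number of worktape head reversals per queue reversal is a sound expansion of what the paper asserts as straightforward in Proposition~\ref{languageSimulation}.
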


\begin{corollary}
It is decidable, given $L$ accepted by a reversal-bounded stack automaton, whether $L$ is dense.
\end{corollary}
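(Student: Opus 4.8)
The plan is to reduce this directly to the density-decidability result for nondeterministic Turing machines with a one-way read-only input and a reversal-bounded read/write worktape, namely the first Proposition of this section. The bridge between the two models is Proposition \ref{languageSimulation}, which asserts that every reversal-bounded stack automaton can be simulated by such an $\NTM$. Since density is a property of the accepted language alone, it suffices to transport the input machine into the Turing-machine world without changing its language.

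Concretely, I would take an arbitrary reversal-bounded stack automaton $M$ and invoke Proposition \ref{languageSimulation} to obtain an $\NTM$ $M'$ with a one-way read-only input and a $k$-reversal-bounded read/write worktape satisfying $L(M') = L(M)$. The point worth checking is that this passage is \emph{effective}: because decidability is an algorithmic statement, I need $M'$ to be producible from $M$ by an algorithm, and the worktape reversal bound $k$ to be computable from $M$. Both are implicit in Proposition \ref{languageSimulation}, whose simulation replays the stack head's motions on the worktape so that a bound on $M$ transfers to a bound on worktape reversals of $M'$. Having $M'$ in hand, I would then apply the density-decidability result for this class of Turing machines to $M'$; since $L(M) = L(M')$, we conclude that $L(M)$ is dense if and only if $L(M')$ is dense, and the latter is decidable, which gives the claim. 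The same two-step reduction proves the preceding corollary for queue automata verbatim.

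I expect no genuine obstacle, only one delicate bookkeeping point: the definition of reversal-boundedness for stack automata. For a stack automaton the bound is placed on the number of changes of direction of the read/write head, which is influenced both by writing to the stack and by re-reading its contents in read-only mode, rather than merely on alternations between pushing and popping. I would confirm that it is precisely this head-direction-change bound that the simulation of Proposition \ref{languageSimulation} converts into a worktape reversal bound for $M'$, so that the hypothesis of the Turing-machine density result, a $k$-reversal-bounded worktape, is genuinely satisfied. Verifying that the correct notion of reversal-boundedness is preserved is the only step requiring care, and it is exactly the form that Proposition \ref{languageSimulation} is stated to handle.
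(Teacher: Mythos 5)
Your proof is correct and follows exactly the paper's route: the corollary is obtained by combining Proposition~\ref{languageSimulation} (the effective simulation of a reversal-bounded stack automaton by an $\NTM$ with one-way read-only input and reversal-bounded worktape) with the density-decidability proposition for that Turing-machine model. Your added care about effectiveness and about the head-direction-change notion of reversal-boundedness for stack automata is a sound elaboration of points the paper leaves implicit.
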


\begin{corollary}
It is decidable, given $L$ accepted by a reversal-bounded $k$-flip pushdown automaton, whether $L$ is dense.
\end{corollary}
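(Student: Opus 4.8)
The plan is to reduce this statement directly to the two results already in hand, namely Proposition~\ref{languageSimulation} and the decidability of density for an $\NTM$ with a one-way read-only input and a $k$-reversal-bounded worktape proved at the start of this section. Both of those results are constructive, so chaining them yields an effective decision procedure and no new combinatorial work is required; the proof is essentially a composition argument.

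Concretely, given a reversal-bounded $k$-flip pushdown automaton $M$, I would first invoke Proposition~\ref{languageSimulation} to obtain an $\NTM$ $M'$ with a one-way read-only input and a reversal-bounded read/write worktape satisfying $L(M') = L(M)$. The point to verify is that this simulation is \emph{effective}: the construction underlying Proposition~\ref{languageSimulation} must output the tuple describing $M'$ from the tuple describing $M$, and the resulting worktape must be $k'$-reversal-bounded for some $k'$ computable from $M$. Intuitively, each push or pop of the flip pushdown becomes a right or left move on the worktape, so the bound on push/pop alternations of $M$ together with the flip count $k$ bounds the number of worktape head reversals of $M'$; these details are exactly what is folded into Proposition~\ref{languageSimulation}, which I take as given.

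Having produced $M'$, I would then apply the density decidability result for reversal-bounded-worktape $\NTM$s to $M'$. That result, built on the regularity of the store language (Proposition~\ref{TMReg}) and the construction of a regular language $L'$ with $\inf(L') = \Sigma^*$ if and only if $\inf(L(M')) = \Sigma^*$, decides whether $L(M')$ is dense. Since $L(M') = L(M)$, density of the two languages coincides, so this decides whether $L(M)$ is dense, completing the argument.

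The only genuine point of care, and hence the main (mild) obstacle, is ensuring that Proposition~\ref{languageSimulation} is used in its effective form: that the translation $M \mapsto M'$ is an algorithm and that the reversal bound of $M'$ is extracted explicitly from $M$. Given this, the corollary follows with no further calculation, and the identical template applies verbatim to the queue-automaton and stack-automaton corollaries, differing only in which clause of Proposition~\ref{languageSimulation} is cited.
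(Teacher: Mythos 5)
Your proposal is correct and matches the paper exactly: the paper derives this corollary (and the queue and stack cases) immediately by composing Proposition~\ref{languageSimulation} with the decidability result for $\NTM$s with a one-way input and a reversal-bounded worktape, precisely as you describe. Your additional remarks about effectiveness of the simulation are sound but already implicit in the paper's constructions.
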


Next, we will address undecidability. The first result applies
to most of the deterministic and nondeterministic families that have
an undecidable emptiness problem. The right marked concatenation of $L$
with a language $R$ is $L \$ R$ where $\$$ is not a letter of $L$ 
(but it can be a letter of $R$).
\begin{proposition}
Let ${\cal L}$ be a family of languages with an undecidable emptiness problem that is closed under right marked concatenation with regular languages. Then it is
undecidable whether $L \in {\cal L}$ is dense.
\label{nonemptiness}
\end{proposition}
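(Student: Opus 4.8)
The plan is to reduce the (undecidable) emptiness problem for ${\cal L}$ to the density problem, so that any algorithm deciding density would also decide emptiness. Given $L \in {\cal L}$ with $L \subseteq \Sigma^*$, I would set $\Delta = \Sigma \cup \{\$\}$ and form the language $L' = L\,\$\,\Delta^*$. Since $\Delta^*$ is regular and ${\cal L}$ is closed under right marked concatenation with regular languages, we have $L' \in {\cal L}$, and this construction is effective (the marker $\$$ is not a letter of $L$, but it is a letter of the tail $\Delta^*$, exactly as the hypothesis permits). The claim to establish is then that $L'$ is dense, over its alphabet $\Delta$, if and only if $L \neq \emptyset$.

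The easy direction is that emptiness forces non-density: if $L = \emptyset$ then $L' = \emptyset$, so $\inf(L') = \emptyset \neq \Delta^*$ (as $\Delta \neq \emptyset$), and $L'$ is not dense.

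For the converse, suppose $L \neq \emptyset$ and fix any $w \in L$, so that $w\,\$\,\Delta^* \subseteq L'$. The key observation is that appending $\$\,\Delta^*$ turns any nonempty language into a dense one: for an arbitrary $v \in \Delta^*$, the word $w\,\$\,v$ lies in $L'$, and $v$ is a suffix, hence an infix, of it. Therefore $\inf(L') = \Delta^*$ and $L'$ is dense. Crucially, taking the regular tail to be $\Delta^*$ rather than, say, $\Sigma^*$ guarantees that the marker $\$$ itself, and every word containing it, is captured as an infix, which is precisely what density relative to $\Delta$ demands.

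Combining the two directions yields a decision procedure for emptiness from one for density, contradicting the undecidability of emptiness on ${\cal L}$; hence density is undecidable on ${\cal L}$. I expect no serious obstacle here: the construction is a one-step reduction, and the only point needing care is the handling of the marker within the density alphabet, namely ensuring that the regular tail ranges over all of $\Delta$ so that words containing $\$$ are also realized as infixes of $L'$.
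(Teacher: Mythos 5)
Your proposal is correct and matches the paper's own proof essentially verbatim: both form $L' = L\,\$\,(\Sigma\cup\{\$\})^*$, note that emptiness of $L$ makes $L'$ empty (hence not dense), and that any $w \in L$ makes every word over the extended alphabet a suffix, hence an infix, of some word in $L'$. Your added remark about the tail ranging over the full marked alphabet $\Delta$ is a valid point of care that the paper handles implicitly by the same choice of $(\Sigma')^*$.
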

\begin{proof}
Let $L \in {\cal L}$ over $\Sigma$. Let $\$$ be a new symbol, and let 
$\Sigma ' = \Sigma \cup \{\$\}$.
Let $L' = L \$ (\Sigma')^* \in {\cal L}$. 

If $L$ is empty, then $L'$ is empty too and so $L'$ is not dense.

If $L$ is not empty then say $w \in L$. Thus $w\$ (\Sigma')^*$ is a subset of $L'$ and so every word of $(\Sigma')^*$ is an infix of $L'$. Hence, $L'$ is dense.
\end{proof}

Next, we describe a construction which was essentially presented in \cite{Baker1974}.
\begin{construction}
\label{theconstruction}
Let $Z$ be a $\DTM$ operating on an initially blank tape.  
Assume that if $Z$ halts, it makes
$2k$ moves for some $k \ge 2$.

Let $w =  ID_1 \#  ID_3  \#  ID_5 \ldots \# ID_{2k-1}  \$ ID_{2k}^R  \#  \ldots \# ID_6^R  \# ID_4^R \#  ID_2^R$,
where the $ID_i$'s are configurations of $Z$.

Let $\Sigma$ be the alphabet over which $w$ is defined.

We construct a $1$-reversal $2$-stack real-time (i.e.\ no
$\lambda$-transitions) $\DPDA$ $M_1$ as follows, given input string $x$:
\begin{enumerate}
\item $M_1$ enters state $q_r$ if $x$ is not in the above format (the finite control can detect this).
\item If $x$ is in the above format, then $M_1$ enters state $q_r$  if  one of the following
is {\bf not} true:
\begin{enumerate}
\item $ID_1$ is not an initial configuration of $Z$ on blank tape,
\item $ID_{2k}$ is not a halting configuration of $Z$,
\item $ID_{i+1}$ is not a valid successor of $ID_i$ for some $i$.
\end{enumerate}
Otherwise, $M_1$ enters state $q_h$
\end{enumerate}
Thus, on any input $x$, $M_1$ only enters $q_h$  if $x$ is a halting computation of $Z$; otherwise $M$ enters $q_r$.
\end{construction}
As the halting problem is undecidable, it follows that the emptiness
problem is undecidable for one-way 1-reversal-bounded 2-stack real-time $\DPDA$s.

From this, and Proposition \ref{nonemptiness}, the following is true.
\begin{corollary}
It is undecidable, given a language $L$
accepted by a one-way deterministic $2$-stack real-time $\DPDA$
where both pushdowns are 1-reversal-bounded, whether $L$ is dense.
\end{corollary}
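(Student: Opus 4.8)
The plan is to invoke Proposition~\ref{nonemptiness} with ${\cal L}$ taken to be the family of languages accepted by one-way deterministic real-time $2$-stack $\DPDA$s in which both pushdowns are $1$-reversal-bounded. Two hypotheses of that proposition must be supplied: that ${\cal L}$ has an undecidable emptiness problem, and that ${\cal L}$ is closed under right marked concatenation with regular languages. The first is already available, since by Construction~\ref{theconstruction} the machine $M_1$ accepts exactly the set of halting computations of a given $\DTM$ $Z$, and this set is empty precisely when $Z$ fails to halt; hence emptiness is undecidable for ${\cal L}$, as the remark following the construction records.

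For the closure property I would take an arbitrary $L \in {\cal L}$ accepted by such a machine $M$ over $\Sigma$, a regular language $R$ over $\Sigma' = \Sigma \cup \{\$\}$ accepted by a $\DFA$ $A$, and build a machine $M'$ recognizing $L \$ R$. Because words of $L$ contain no $\$$, the first occurrence of $\$$ in any input word is forced to be the separator. Thus $M'$ simulates $M$ verbatim on the maximal $\$$-free prefix; upon reading the first $\$$ it inspects the current state, passing control to a simulation of $A$ (started in $A$'s initial state) if that state is final for $M$, and entering a dead state otherwise. In this second phase $M'$ reads the remaining input one symbol at a time, tracking the state of $A$ in its finite control and accepting exactly when $A$ does. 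Then $L(M') = L \$ R$, provided $M'$ still lies in ${\cal L}$.

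The step I expect to require the most care, and the main obstacle, is verifying that $M'$ continues to meet every restriction defining ${\cal L}$. Determinism is immediate, since $M$ and $A$ are deterministic and the $\$$-transition is fixed by the current state. The real-time condition is preserved because every transition of $M'$---the simulation of $M$, the $\$$-transition, and each step of the $A$-simulation---consumes exactly one input symbol and introduces no $\lambda$-moves. The delicate point is the reversal bound: throughout the $A$-simulation phase both pushdowns must stay inert, which I would arrange by having each such transition rewrite the top symbol of each stack by itself, a move that leaves the store size unchanged and is therefore neither a push nor a pop. Since this creates no change of direction on either stack, no extra reversal is accrued, and both pushdowns remain $1$-reversal-bounded.

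With undecidable emptiness and closure under right marked concatenation with regular languages both established for ${\cal L}$, Proposition~\ref{nonemptiness} applies and yields that density is undecidable for languages accepted by one-way deterministic real-time $2$-stack $\DPDA$s with both pushdowns $1$-reversal-bounded, which is exactly the assertion.
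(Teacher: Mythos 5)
Your proposal is correct and follows the paper's own route exactly: the paper derives this corollary by combining Construction~\ref{theconstruction} (undecidable emptiness for this machine class) with Proposition~\ref{nonemptiness}, leaving the closure under right marked concatenation with regular languages implicit, which you verify explicitly and correctly (including the key points that real-timeness forbids $\lambda$-moves and that keeping both stacks inert during the $\DFA$-simulation phase adds no reversals).
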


Undecidability is therefore immediate for deterministic Turing machines
with two 1-reversal-bounded worktapes, as 
each such worktape can simulate a pushdown.
\begin{corollary}
It is undecidable, given $L$ accepted by a deterministic
real-time Turing
machine with a one-way read-only input and two 1-reversal-bounded 
worktapes, 
whether $L$ is dense.
\end{corollary}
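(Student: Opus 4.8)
The plan is to reduce directly from the previous corollary, exploiting the fact that a $1$-reversal-bounded pushdown stack is a special case of a $1$-reversal-bounded worktape. Given a one-way deterministic $2$-stack real-time $\DPDA$ $M_1$ in which both pushdowns are $1$-reversal-bounded, I would construct a deterministic real-time Turing machine $M'$ with a one-way read-only input and two $1$-reversal-bounded worktapes such that $L(M') = L(M_1)$. Since density is undecidable for $L(M_1)$ by the previous corollary, it would then be undecidable for $L(M')$.

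The simulation of a single pushdown by a single worktape is standard. I would store the contents of a stack on a worktape so that the head scans the top symbol, with the stack growing to the right. A push is realized by moving the worktape head one cell to the right and writing the new symbol; a pop is realized by reading the scanned (top) symbol and then moving the head one cell to the left; reading the top without altering the stack is a stay move. The two stacks of $M_1$ are simulated independently on the two worktapes, which a multitape Turing machine can update simultaneously in a single step. Determinism is preserved because each configuration of $M_1$ determines a unique configuration of $M'$ together with a unique next move.

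The crucial point is that the reversal bound transfers. A $1$-reversal-bounded pushdown makes at most one switch between its pushing phase and its popping phase; under the encoding above, pushes move the head only to the right and pops move it only to the left, so the worktape head changes direction at most once, that is, each worktape is $1$-reversal-bounded. Thus $M'$ falls exactly within the claimed machine class.

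The main obstacle is preserving the real-time property. A pushdown transition may push a string of bounded length $\ell > 1$ in a single step, whereas a Turing machine writes one symbol per step; a naive simulation of such a push would require several steps and break real-time. I would resolve this either by assuming without loss of generality that $M_1$ pushes at most one symbol per move (the machine of Construction \ref{theconstruction} reads and compares configuration symbols one at a time, so it is already of this form), or, in general, by letting each worktape cell hold a block of up to $\ell$ store symbols so that a full push is a single write and partial pops are tracked in the finite control; either option keeps every move of $M'$ in correspondence with a single move of $M_1$ and leaves the reversal count unchanged. With real-time and the $1$-reversal bound both secured, $L(M') = L(M_1)$ follows, and the undecidability of density is inherited.
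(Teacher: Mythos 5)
Your proposal is correct and takes essentially the same route as the paper, which derives the corollary in one line from the preceding one by observing that each 1-reversal-bounded worktape can simulate a 1-reversal-bounded pushdown; you merely spell out the details (determinism, the transfer of the reversal bound, and the real-time issue) that the paper treats as immediate. Your handling of multi-symbol pushes via blocking is a sound way to preserve the real-time property, though it is not even needed for the machine of Construction \ref{theconstruction}, as you note.
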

This shows that decidability changes between one and two reversal-bounded
worktapes for these kinds of Turing machines.

It is not clear whether (or not) the result above can be extended to 
unary languages accepted by one-way deterministic two pushdown machines. 
In fact, we conjecture that density is decidable for this model.
However, the next result shows that it is undecidable for unary
languages when nondeterminism is used.
\begin{proposition}
It is undecidable, given unary $L$ accepted by a one-way nondeterministic
real-time two pushdown machine, where both pushdowns are 1-reversal-bounded, 
whether $L$ is dense.
\end{proposition}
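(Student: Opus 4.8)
The plan is to reduce density for this unary model to the (undecidable) nonemptiness of the machines from Construction \ref{theconstruction}, after first reframing density in the unary case. The key preliminary observation is that for a unary language $L \subseteq \{a\}^*$, the infixes of $a^n$ are exactly $\{a^m \mid m \le n\}$, so $\inf(L) = \{a\}^*$ if and only if $L$ contains words of unbounded length, i.e.\ if and only if $L$ is infinite. Thus, for the model in question, deciding density is the same as deciding infiniteness: an empty language is not dense, while every infinite unary language is dense.

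Next, I would reduce nonemptiness of a one-way $1$-reversal $2$-stack real-time $\DPDA$ (undecidable, as noted immediately after Construction \ref{theconstruction}) to density of a unary machine of the required type. Let $N$ be such a $\DPDA$ over an alphabet $\Delta$. I would build a one-way nondeterministic real-time machine $M$ with two $1$-reversal-bounded pushdowns and unary input alphabet $\{a\}$ as follows. On input $a^n$, $M$ operates in two phases. In the \emph{guessing phase}, on each input symbol $a$, $M$ nondeterministically guesses the next symbol of a word $w \in \Delta^*$ and simulates the corresponding single step of $N$; since $N$ is real-time this consumes exactly one input symbol per simulated step, and the two pushdowns perform precisely the operations of $N$, hence remain $1$-reversal-bounded. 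At a nondeterministically chosen point, if the simulated state of $N$ is accepting, $M$ switches into an accepting \emph{padding phase}, in which every remaining input symbol is consumed while both pushdowns are left untouched, and $M$ accepts when the input is exhausted.

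It then remains to verify the reduction. The padding phase performs no pushdown operations and so introduces no new reversals, so both pushdowns of $M$ stay $1$-reversal-bounded; $M$ is real-time since every move consumes exactly one input symbol; and $M$ is nondeterministic with unary input, as required. Moreover, $a^n \in L(M)$ if and only if there exists $w \in L(N)$ with $|w| \le n$. Consequently, if $L(N) \neq \emptyset$, letting $\ell$ be the length of a shortest word of $L(N)$, we obtain $L(M) = \{a^n \mid n \ge \ell\}$, which is infinite and therefore dense; and if $L(N) = \emptyset$, then $L(M) = \emptyset$, which is not dense. Hence $L(M)$ is dense if and only if $L(N)$ is nonempty, and since the latter is undecidable, so is density for this model.

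The step I expect to be the main obstacle is the amplification: nonemptiness only guarantees a single, possibly short, accepted word, whereas density (infiniteness) requires words of unbounded length. The whole point is therefore to engineer the padding phase so that it adds arbitrarily many input symbols \textbf{without} adding any pushdown reversals and without violating real-time, which is what makes the reversal-bounded, real-time constraints compatible with producing an infinite language. The supporting conceptual point — that density collapses to infiniteness over a unary alphabet — is what allows nondeterministic guessing of a verifier's accepted word to be translated into a length, and care is needed only to keep the guess of $w$ and the decision to stop guessing entirely within the nondeterminism of $M$.
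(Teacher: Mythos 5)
Your proposal is correct and is essentially the paper's own argument: nondeterministically guess the input to the deterministic $1$-reversal $2$-stack real-time verifier while consuming one $a$ per simulated move, then pad with further $a$'s (no stack activity) so that acceptance of some word of length $k$ yields acceptance of all $a^{k+i}$, making $L(M)$ co-finite (hence dense) iff the verifier's language is nonempty and empty otherwise. The only cosmetic differences are that you reduce from emptiness of an arbitrary machine of this class rather than directly from the halting machine $M_1$ of Construction \ref{theconstruction} (the paper's emptiness undecidability comes from that very construction), and that you make explicit the observation, implicit in the paper, that unary density coincides with infiniteness.
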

\begin{proof}
Starting with Construction \ref{theconstruction}, from 
$M_1$, now construct a $1$-reversal $2$-stack $\NPDA$ $M$ as follows.
$M$ simulates $M_1$ described  in the construction above by
reading symbol $a$ at each move of $M_1$ while nondeterministically
guessing the input to $M_1$. If $M_1$ lands in $q_h$ after
reading $a^k$ for some $k$, $M$ accepts any string $a^{k+i}$ for $i \ge 0$.
Clearly, $\inf(L(M)) = a^*$ if the DTM $Z$ halts on a blank tape. It $Z$
does not halt on a blank tape, $L(M)$ is empty and, hence,
$\inf(L(M))$ is also empty. It follows that $L(M)$ is dense if and
only if $Z$ halts on blank tape. Moreover, $M$ is real-time.
\end{proof}

There are some other interesting models with undecidable emptiness problems for which density will also be undecidable.
It is undecidable, given a $2\DCM(2)$ 
(a two-way $\DFA$ with two reversal-bounded 
counters) $M$ over a letter-bounded language, whether L(M) is empty \cite{Ibarra1978}. This model is also closed under right marked
concatenation. 
\begin{corollary}
It is undecidable, given a $2\DCM(2)$ $M$, whether $L(M)$ is dense. 
\end{corollary}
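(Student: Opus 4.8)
The plan is to obtain this corollary as a direct instance of Proposition~\ref{nonemptiness}, taking ${\cal L}$ to be the family of languages accepted by $2\DCM(2)$ machines. That proposition has exactly two hypotheses---undecidable emptiness and closure under right marked concatenation with regular languages---so I would simply verify each of these for the $2\DCM(2)$ family and then quote the proposition.

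For the first hypothesis, I would appeal to the result of~\cite{Ibarra1978} noted just above: emptiness is undecidable for $2\DCM(2)$ machines even when the input is restricted to be letter-bounded. Since this is a restriction of the general problem, emptiness for arbitrary $2\DCM(2)$ machines is undecidable \emph{a fortiori}, so the first hypothesis holds. For the second hypothesis I would supply a short closure construction. Given a $2\DCM(2)$ machine $M$ accepting $L \subseteq \Sigma^*$, a regular language $R$, and the fresh marker $\$$, I build a $2\DCM(2)$ machine $M'$ for $L \$ R$ as follows. On input $x$, $M'$ first deterministically verifies that $x$ contains exactly one occurrence of $\$$, writing $x = u \$ v$ with $u \in \Sigma^*$. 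It then simulates $M$ on the prefix $u$, treating the $\$$ cell as $M$'s right endmarker and driving its two reversal-bounded counters exactly as $M$ would; the number of counter reversals is therefore inherited from $M$ and remains bounded, while the two-way input head is already unrestricted in this model. If this simulation accepts, $M'$ moves its head rightward past $\$$ and runs a $\DFA$ for $R$ on the suffix $v$ using only finite control, so no further counter activity occurs. Thus $L(M') = L \$ R$, and $M'$ is again a deterministic two-way machine with two reversal-bounded counters, establishing the required closure.

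With both hypotheses in hand, Proposition~\ref{nonemptiness} applies verbatim and yields undecidability of density for $2\DCM(2)$. The only step beyond a literal citation is the closure construction, and I expect its sole delicate point to be confirming that routing the deterministic two-way computation cleanly through the unique marker $\$$ preserves determinism while keeping the counter-reversal bound finite; because the counters are left idle throughout the regular post-check on $v$, the bound is simply inherited from $M$, so even this obstacle is mild.
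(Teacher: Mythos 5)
Your overall route is exactly the paper's: the corollary appears immediately after the paper notes that emptiness is undecidable for $2\DCM(2)$ even over letter-bounded languages \cite{Ibarra1978} and that the model is closed under right marked concatenation, and the intended (unwritten) proof is precisely an application of Proposition~\ref{nonemptiness}. Your explicit closure construction fills in a detail the paper leaves implicit, and its core --- simulate $M$ on the prefix preceding $\$$, treating that $\$$ as $M$'s right end-marker, with the counters left idle during the regular post-check --- is sound and preserves both determinism and the reversal bound, as you say.

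There is, however, one genuine slip in the closure construction: you have $M'$ verify that the input contains \emph{exactly one} occurrence of $\$$. By the paper's definition, right marked concatenation $L\$R$ allows $\$$ to be a letter of $R$, and the proof of Proposition~\ref{nonemptiness} relies on this: it takes $R = (\Sigma')^*$ with $\Sigma' = \Sigma \cup \{\$\}$, so words of $L\$R$ may contain many occurrences of $\$$. With your check, $M'$ accepts only $L\,\$\,(R \cap \Sigma^*)$; in the proposition's reduction this yields $L\,\$\,\Sigma^*$, which is \emph{not} dense over $\Sigma'$ even when $L \neq \emptyset$ (for instance, $\$\$$ is never an infix, since every accepted word contains exactly one $\$$), so the proposition as quoted would no longer deliver the reduction. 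The fix is immediate: since $\$ \notin \Sigma$, the factorization $x = u\$v$ with $u \in \Sigma^*$ is unique once you split at the \emph{first} occurrence of $\$$; have $M'$ treat the first $\$$ (deterministically locatable by scanning right from the left end-marker) as $M$'s right end-marker, and upon acceptance run the $\DFA$ for $R$ on everything to the right of that first $\$$, where further $\$$'s are ordinary input letters. With that one-word correction your argument is complete and coincides with the paper's intended proof.
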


For $2\NCM$, i.e.\ two-way $\NFA$s with $k$ 
reversal-bounded counters for some $k$, it is known that all such
machines accepting unary languages can be effectively converted to $\NFA$s \cite{IbarraJiang}. Therefore:
\begin{proposition}
Given a unary $2\NCM$ $M$, it is decidable whether $L(M)$
is dense.
\end{proposition}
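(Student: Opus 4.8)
The plan is to reduce density of a unary language to infiniteness, and then invoke the cited effective conversion to an $\NFA$. First I would observe that over a unary alphabet $\{a\}$, the infixes of a word $a^i$ are exactly $\{a^j : 0 \le j \le i\}$, so that $\inf(L(M)) = \bigcup_{a^i \in L(M)} \{a^j : 0 \le j \le i\}$. This set equals $a^*$ precisely when $L(M)$ contains words of arbitrarily large length, that is, precisely when $L(M)$ is infinite. Thus for unary languages, density coincides with infiniteness, and this equivalence is the single substantive step of the argument.

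Next, since $M$ accepts a unary language, by the result of \cite{IbarraJiang} cited immediately above I would effectively construct an $\NFA$ $M'$ with $L(M') = L(M)$. As $L(M')$ is regular, it is decidable whether it is infinite: by the standard pumping argument one checks whether $M'$ accepts some word whose length lies strictly between the number of states $n$ and $2n$, or equivalently whether some accepting path of $M'$ passes through a reachable cycle. Combining the two observations, $L(M)$ is dense if and only if the regular language $L(M')$ is infinite, which is decidable, completing the proof.

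I expect essentially no hard step here. The only real content is the first observation that over a unary alphabet density is equivalent to infiniteness; once this is in hand, the cited conversion to an $\NFA$ together with the routine decidability of infiniteness for regular languages finishes the argument. The one point requiring a little care is to confirm that the conversion of \cite{IbarraJiang} is \emph{effective}, so that the $\NFA$ $M'$ is actually computable from $M$ rather than merely known to exist; this is precisely what the cited statement guarantees, so no additional work is needed.
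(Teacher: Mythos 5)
Your proposal is correct and matches the paper's (implicit) argument: the paper derives the proposition directly from the effective conversion of unary $2\NCM$s to $\NFA$s via \cite{IbarraJiang}, leaving the decidability of density for the resulting regular language as routine. Your explicit observation that, over a unary alphabet, density is equivalent to infiniteness is a valid (and slightly more elementary) way to discharge that final routine step, rather than testing universality of the infix language of the $\NFA$.
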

In contrast:  
\begin{proposition}  
It is undecidable, given a $2\DFA$ $M$ with one unrestricted counter
over a unary language,  whether $L(M)$ is dense.
\end{proposition}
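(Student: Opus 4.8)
The plan is to first collapse the density question to a cardinality question, and then reduce from the halting problem of two-counter machines. For a unary language $L \subseteq a^*$, note that $\inf(L) = a^*$ if and only if $L$ is infinite: the infixes of $a^i$ are exactly $\{a^j \mid 0 \le j \le i\}$, so $a^j \in \inf(L)$ for every $j$ precisely when $L$ contains words of unbounded length. Hence, over a unary alphabet, density coincides with infiniteness, and it suffices to show that infiniteness of $L(M)$ is undecidable for this model.

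First I would take a deterministic two-counter (Minsky) machine $\mathcal{M}$, started with both counters zero, whose halting problem is undecidable. From $\mathcal{M}$ I would build a $2\DFA$ $M$ with one unrestricted counter over $\{a\}$ as follows. On input $a^n$ (delimited by endmarkers), $M$ simulates $\mathcal{M}$ step by step, keeping the first counter $c_1$ of $\mathcal{M}$ in its own counter and representing the second counter $c_2$ by the position of the input head strictly between the endmarkers. Incrementing, decrementing, and zero-testing $c_1$ are the native counter operations of $M$; incrementing $c_2$ is a move right, decrementing $c_2$ is a move left, and $c_2 = 0$ is detected by scanning the left endmarker. If the simulation of $c_2$ ever tries to move the head past the right endmarker (that is, $c_2$ would exceed $n$), then $M$ halts and rejects. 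The state of $\mathcal{M}$ is carried in the finite control, and $M$ enters an accepting state exactly when $\mathcal{M}$ reaches its halting state. Since $\mathcal{M}$ is deterministic, so is $M$.

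For correctness, suppose $\mathcal{M}$ halts. Then along its finite computation both counters stay bounded by some $B$, so for every $n \ge B$ the simulation never needs to move past the right endmarker, completes faithfully, and $M$ accepts $a^n$; hence $L(M) \supseteq \{a^n \mid n \ge B\}$ is infinite and $L(M)$ is dense. Conversely, if $\mathcal{M}$ does not halt, then $M$ never reaches its accepting state on any input: on $a^n$ it either rejects when $c_2$ overflows, or it runs forever without ever entering the halting state (for instance when $c_2$ stays bounded but $c_1$ grows without bound, or when the configurations cycle). In every case $a^n \notin L(M)$, so $L(M) = \emptyset$ is not dense. Therefore $L(M)$ is dense if and only if $\mathcal{M}$ halts, yielding undecidability.

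The main obstacle is really a point requiring care rather than a genuine difficulty: making the bounded simulation of the second counter by the head position faithful, in particular getting the zero-test and the overflow detection exactly right at the two endmarkers, and observing that every halting run of $\mathcal{M}$ admits a single threshold $B$ beyond which \emph{all} longer inputs are accepted (so that halting produces an infinite accepted set, not merely one accepted word). It is worth emphasizing that the unrestricted counter is essential here: this is precisely the feature separating the present model from the reversal-bounded unary $2\NCM$, for which density was just shown to be decidable.
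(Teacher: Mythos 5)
Your proof is correct, and its core construction is the same as the paper's: simulate a Minsky two-counter machine $Z$ with the $2\DFA$'s own counter holding $C_1$ and the head position on the unary input holding $C_2$, with overflow detected at the right endmarker. The genuine difference is the polarity of the reduction. The paper's $M$ \emph{accepts} when $C_2$ overflows the input and \emph{rejects} when $Z$ halts, so that $L(M) = a^*$ when $Z$ diverges and $L(M)$ is finite when $Z$ halts; this direction only works under the normalization the paper explicitly assumes, namely that a non-halting $Z$ diverges with unboundedly increasing counters (otherwise a looping $Z$ with bounded counters would never trigger the overflow-acceptance on long inputs and the diverging side of the reduction would break). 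Your flipped version --- reject on overflow, accept on halt --- makes density equivalent to \emph{halting} rather than non-halting, and needs no normalization of $Z$ at all, since a diverging simulation simply never reaches the accepting state and $L(M) = \emptyset$; the price is that your $M$ may run forever (harmless, as acceptance requires reaching an accepting state) and that you separate infinite from empty rather than $a^*$ from finite. That last point matters for the surrounding text: the paper's polarity is what directly yields its immediately following corollary that $L(M) = a^*$ and finiteness of $L(M)$ are also undecidable for this model, whereas your version would instead give undecidability of emptiness/infiniteness. Your explicit observation that unary density coincides with infiniteness (which the paper uses only implicitly via ``$a^*$ is dense, finite unary languages are not'') is correct and cleanly stated, and your threshold argument --- the fixed halting computation of $Z$ bounds $C_2$ by some $B$, so \emph{all} inputs $a^n$ with $n \ge B$ are accepted --- correctly handles the point where a careless version of this reduction would only produce one accepted word.
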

\begin{proof} It is well known that it is undecidable, given a machine 
$Z$  with no input 
but with two (unrestricted) counters $C_1$ and $C_2$ that  are initially set to zero, 
whether $Z$  will halt  \cite{Minsky}. In fact, we may assume that if $Z$  does not halt, 
it goes into an infinite loop with increasing counter values.

Given such a machine $Z$,  we construct a $2\DFA$ $M$ which, when given an input
$w = a^k$   (with left and right end markers) simulates $Z$. $M$ has one counter $C_1$ 
to simulate $C_1$ of $Z$,  and uses its unary two-way input tape to  simulate  counter 
$C_2$ of $Z$.    

If, during the the simulation,  $M$  attempts  to go past the right end-marker,  $M$  
accepts $w$.   If , during the simulation, $Z$ halts, $M$ rejects.   Clearly, if $Z$ does not 
halt, $L(M) = a^*$. If $Z$ halts, $L(M)$ is finite.  It follows that $L(M)$ is dense if and only 
if $Z$ does not halt, which is undecidable.
\end{proof}

\begin{corollary} It is undecidable, given a $2\DFA$ $M$ with one unrestricted counter
over a unary language,  whether $L(M) = a^*$ (respectively, whether $L(M)$ is
finite).
\end{corollary}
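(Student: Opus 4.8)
The plan is to reuse, essentially verbatim, the machine $M$ constructed in the proof of the preceding proposition, observing that the very same reduction already witnesses both claims. Recall that from a two-counter machine $Z$ (no input, counters initialized to zero, and with the standing assumption that if $Z$ fails to halt it loops forever with unboundedly growing counter values) we built a $2\DFA$ $M$ with a single unrestricted counter over the unary alphabet that, on input $a^k$, simulates $Z$ by holding $C_1$ in its counter and encoding $C_2$ as the position of its two-way input head. The machine accepts $a^k$ exactly when the simulation drives the input head past the right end-marker, and it rejects as soon as $Z$ halts.

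First I would make explicit the dichotomy that the construction guarantees. If $Z$ never halts, then $C_2$ grows without bound, so on every input $a^k$ the head eventually passes the right end-marker; hence $M$ accepts every $a^k$ and $L(M) = a^*$. If instead $Z$ halts, its computation runs for only finitely many steps, so the value of $C_2$, and therefore the rightmost head position reached, is bounded by some $m$ depending only on $Z$; for every $k > m$ the head never passes the right marker and $Z$ eventually halts, causing $M$ to reject, so $L(M) \subseteq \{\lambda, a, \ldots, a^m\}$ is finite.

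Consequently $L(M) = a^*$ holds if and only if $Z$ does not halt, and $L(M)$ is finite if and only if $Z$ halts. Since the halting problem for two-counter machines is undecidable, any decision procedure for ``$L(M) = a^*$'' or for ``$L(M)$ is finite'' would decide halting, yielding both undecidability claims at once. I do not expect a real obstacle here; the only point needing care is the finiteness half, namely checking that when $Z$ halts the accepted set is not merely a proper subset of $a^*$ but genuinely finite. This reduces to the observation that a halting computation of $Z$ attains only boundedly many values of $C_2$, hence reaches only boundedly many end-marker positions, so all sufficiently long inputs are rejected; everything else is immediate from the construction.
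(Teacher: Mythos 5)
Your proposal is correct and matches the paper exactly: the corollary is an immediate by-product of the construction in the preceding proposition, whose proof already establishes that $L(M) = a^*$ when $Z$ does not halt (using the assumption of unboundedly increasing counter values) and $L(M)$ is finite when $Z$ halts, so both equivalences reduce to the halting problem just as you argue. Your added care on the finiteness half (bounding the rightmost head position by the halting computation of $Z$) simply makes explicit what the paper leaves implicit.
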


It is open whether or not a language accepted by a $2\NCM$ with one reversal-bounded counter is empty, and density for this model is open as well

\section{Conclusions and Future Directions}

Here, we show that determining whether a language is dense is
decidable for nondeterministic Turing machines with a one-way
read-only input and one reversal-bounded worktape, and therefore
this is also decidable for one-way nondeterministic machines
with several different reversal-bounded data structures, such
as pushdowns, stacks, $k$-flip pushdowns, and queues.
However, if the number of reversal-bounded Turing tapes is 
increased to two, then it is undecidable, even if the machine
is deterministic, real-time, and $1$-reversal-bounded.

There are still several open problems related to determining
density. It is still unknown whether the density of $L$ is decidable, 
if $L$ is accepted by a one-way nondeterministic
(or deterministic) machine with multiple reversal-bounded counters. This
is decidable when there is only one counter.
The problem is also open for $L$ accepted
by one-way deterministic machines with one counter (no reversal-bound),
and for deterministic Turing machines with multiple reversal-bounded 
tapes accepting unary languages.

\biblio{density} 

\begin{thebibliography}{10}
\expandafter\ifx\csname url\endcsname\relax
  \def\url#1{{\tt #1}}\fi
\expandafter\ifx\csname urlprefix\endcsname\relax\def\urlprefix{}\fi

\bibitem{CFHandbook}
{\sc J.~Autebert}, {\sc J.~Berstel}, {\sc L.~Boasson}, {\it Handbook of Formal
  Languages\/}, ~1, chapter Context-Free Languages and Pushdown Automata,
  Springer-Verlag, Berlin, 1997.

\bibitem{Baker1974}
{\sc B.~Baker}, {\sc R.~Book}, Reversal-bounded multipushdown machines. {\it
  Journal of Computer and System Sciences\/} {\bf 8} (1974) 3, 315--332.

\bibitem{TheoryOfCodes}
{\sc J.~Berstel}, {\sc D.~Perrin}, {\it Theory of Codes\/}. Academic Press,
  Orlando, 1985.

\bibitem{densityCFL}
{\sc J.~Eremondi}, {\sc O.~Ibarra}, {\sc I.~McQuillan}, On the Density of
  Context-Free and Counter Languages. {\it International Journal of Foundations
  of Computer Science\/}  (2017). Accepted.

\bibitem{StackAutomata}
{\sc S.~Ginsburg}, {\sc S.~Greibach}, {\sc M.~Harrison}, One-way Stack
  Automata. {\it J. ACM\/} {\bf 14} (1967) 2, 389--418.

\bibitem{Stack2}
{\sc S.~Ginsburg}, {\sc S.~Greibach}, {\sc M.~Harrison}, Stack Automata and
  Compiling. {\it J. ACM\/} {\bf 14} (1967) 1, 172--201.

\bibitem{Harju2002278}
{\sc T.~Harju}, {\sc O.~Ibarra}, {\sc J.~Karhum\"aki}, {\sc A.~Salomaa}, Some
  Decision Problems Concerning Semilinearity and Commutation. {\it Journal of
  Computer and System Sciences\/} {\bf 65} (2002) 2, 278--294.

\bibitem{flipPushdown}
{\sc M.~Holzer}, {\sc M.~Kutrib}, Flip-Pushdown Automata: Nondeterminism is
  Better than Determinism. In: {\sc Z.~\'Esik}, {\sc Z.~F\"ul\"op} (eds.), {\it
  Developments in Language Theory\/}. Lecture Notes in Computer Science 2710,
  2003, 361--372.

\bibitem{HU}
{\sc J.~E. Hopcroft}, {\sc J.~D. Ullman}, {\it Introduction to Automata Theory,
  Languages, and Computation\/}. Addison-Wesley, Reading, MA, 1979.

\bibitem{Ibarra1978}
{\sc O.~Ibarra}, Reversal-Bounded Multicounter Machines and Their Decision
  Problems. {\it Journal of the ACM\/} {\bf 25} (1978) 1, 116--133.

\bibitem{IbarraJiang}
{\sc O.~Ibarra}, {\sc T.~Jiang}, {\sc N.~Tran}, {\sc H.~Wang}, New Decidability
  Results Concerning Two-Way Counter Machines. {\it SIAM J. Comput.\/} {\bf 23}
  (1995) 1, 123--137.

\bibitem{StoreTechReport}
{\sc O.~Ibarra}, {\sc I.~McQuillan}, {\it On Store Languages of Language
  Acceptors\/}. Technical Report arXiv:1702.07388, arXiv.org, 2017.
\newline\urlprefix\url{https://arxiv.org/abs/1702.07388}

\bibitem{denseIto}
{\sc M.~Ito}, Dense and Disjunctive Properties of Languages. In: {\sc
  Z.~\'Esik} (ed.), {\it Fundamentals of Computation Theory 1993\/}. Lecture
  Notes in Computer Science 710, Springer Berlin Heidelberg, 1993, 31--49.

\bibitem{JKT}
{\sc H.~J\"{u}rgensen}, {\sc L.~Kari}, {\sc G.~Thierrin}, Morphisms Preserving
  Densities. {\it International Journal of Computer Mathematics\/} {\bf 78}
  (2001), 165--189.

\bibitem{CodesHandbook}
{\sc H.~J\"urgensen}, {\sc S.~Konstantinidis}, {\it Handbook of Formal
  Languages\/}, ~1, chapter Codes, Springer-Verlag, Berlin, 1997.

\bibitem{JurgensenActaCybernetica}
{\sc H.~J\"{u}rgensen}, {\sc I.~McQuillan}, Homomorphisms Preserving Types of
  Density. {\it Acta Cybernetica\/} {\bf 19} (2009) 2, 499--516.

\bibitem{Minsky}
{\sc M.~L. Minsky}, Recursive Unsolvability of {P}ost's Problem of ``Tag'' and
  other Topics in Theory of {T}uring {M}achines. {\it Annals of Mathematics\/}
  {\bf 74} (1961) 3, pp. 437--455.

\bibitem{shyr}
{\sc H.~J. Shyr}, {\it {Free Monoids and Languages}\/}. third edition, Hon Min
  Book Company, Taichung, 2001.

\end{thebibliography}
\end{document}